\def\NAT@def@citea{\def\@citea{\NAT@separator}}
\theoremstyle{plain}
\newtheorem{theorem}{Theorem}
\newtheorem{lemma}[theorem]{Lemma}
\theoremstyle{definition}
\theoremstyle{remark}
\begin{document}
	
	
	\title{Type I-Type II Mixture Censoring Scheme for Lifetime Data Analysis}
	
	\author{
		\name{K.~K. Anakha\textsuperscript{a}\thanks{CONTACT K.~K. Author. Email: anakhakk@stthomas.ac.in} and V.~M. Chacko\textsuperscript{a}}
		\affil{\textsuperscript{a}Department of Statistics, St.Thomas College (Autonomous), Thrissur, University of Calicut, Kerala, India}
	}
	
	\maketitle
	\begin{abstract}
		The Type-I and Type-II censoring schemes are the most prominent and commonly used censoring schemes in practice. In this work, a mixture of Type-I and Type-II censoring schemes, named the Type I-Type II mixture censoring scheme, has been introduced. Different censoring schemes have been discussed, along with their benefits and drawbacks. For the proposed censoring scheme, we analyze the data under the assumption that failure times of experimental units follow the Weibull distribution. The computational formulas for the expected number of failures and the expected failure time are provided. Maximum likelihood estimation and Bayesian estimation are used to estimate the model parameters. On the basis of thorough Monte Carlo simulations, the investigated inferential approaches are evaluated. Finally, a numerical example is provided to demonstrate the method of inference discussed here. 
	\end{abstract}
	
	\begin{keywords}
		Type-I and Type-II censoring; Maximum likelihood estimaton; Bayesian estimation
	\end{keywords}

\section{Introduction}
Effective censoring schemes are crucial in the field of statistical analysis, as they play a vital role in accurately evaluating and deriving significant insights from censored data. Censoring, a common phenomenon in many fields such as survival analysis, reliability engineering, and clinical trials, occurs when the complete information about the event of interest is not available for some observations. Traditional censoring schemes (Type-I and Type-II), while widely used, may not capture the full potential of the available data or adequately handle complex scenarios. 

In this article, we delve into a new censoring scheme that promises to advance statistical inference techniques and unlock new avenues of knowledge. This innovative approach challenges conventional methodologies, offering researchers a fresh perspective on handling censored data. By introducing a novel framework for censoring, we aim to bridge the gap between existing methods and the increasing complexity of real-world datasets, enabling more accurate and insightful analysis.

The conventional censoring schemes, Type-I and Type-II, have specific features. Type-I censoring scheme involves running the experiment until a fixed time point $T$ is reached, while Type-II censoring scheme allows the experiment to continue until the occurrence of the $m$th failure. These schemes lack complete event time information for censored observations. This loss of information can limit the precision and accuracy of statistical analysis, potentially leading to biased estimates or reduced statistical power.

To address the drawback, Epstein\cite{epstein1954truncated} introduced, a mixture of Type-I and Type-II censoring scheme, known as hybrid censoring scheme. Here the experiment is terminated at the time $T_{1}^{*}=min\{x_{m:n},T\}$, where $T\epsilon(0,\infty)$ and $1\leq m\leq n$ are prefixed and $x_{m:n}$ represents the $m^{th}$ failure time when $n$ units are on the experiment. This scheme also referred to as Type-I hybrid censoring scheme. Childs et al.\cite{childs2003exact} introduced a new censoring scheme known as, Type-II hybrid censoring scheme, in which the stopping time is $T_{2}^{*}=max\{x_{m:n},T\}$. However, in recent years, hybrid censoring scheme has grown increasingly prevalent in reliability and life-testing investigations, these schemes lack the flexibility to allow units to be removed from the experiment before it reaches its termination.
 
Later, Cohen\cite{cohen1963progressively} introduced progressive censoring scheme. Progressive censoring permits the removal of experimental objects at various stages to preserve the total cost and time associated with the experiment. From the perspective of experimental design, this approach could be useful in terms of identifying an early appropriate censoring scheme. Nevertheless, while this notion is frequently made in the literature, it’s not really met in real-world trials as the investigator can alter the censoring numbers during the study regardless of the circumstances. As a result, having a model that supports this type of adaptation is critical. In exchange for the efficiency benefit, Burkschat\cite{burkschat2008optimality} points out that progressive censoring has a longer test time than the conventional Type-II censoring strategy. In order to overcome the drawback of progressive Type-II censoring scheme Kundu and Joarder introduced Type-II progressively hybrid censoring scheme having fixed experimental time.

Thereafter Ng et al.\cite{ng2009statistical} introduced a censoring scheme called adaptive Type-II progressive censoring scheme. In which the effective sample size $m$ is fixed in advance, and the progressive censoring scheme is provided. But the number of items progressively removed from the experiment upon failure may change during the experiment. If the experimental time exceeds a prefixed time $T$ but the number of observed failures does not reach $m$, the experiment can terminate as soon as possible by adjusting the number of items progressively removed from the experiment upon failure.

Most of the censoring schemes discussed above do not guarantee a minimum of $m$ failures and a time constraint simultaneously. As a result, we’ve come up with a new censoring scheme named, Type I-Type II (T1-T2) mixture censoring scheme that ensures a minimum of $m$ and a maximum of $n$ failures with a prefixed supplementary time $S$. In this work, an experiment is carried out across $n$ units with the assumptions that the effective sample size $m$ and supplementary time $S$ are fixed prior to the experiment. At the time of $m$th failure the experiment is further continued for a prefixed supplementary time of $S$. The test is terminated at a time $T^{*}=min\{X_{n:n}, X_{m:n}+S\}$. For $S=0$, the proposed method converge to the conventional T2 censoring method.

The remaining sections of the paper are arranged as follows. In the second section, a description of the introduced censoring scheme using Weibull distribution is given. The third and fourth sections are the attempts to derive expressions for the expected duration of the experiment and the Fisher information matrix, respectively. The parameter estimation using maximum likelihood method (MLE) and Bayes method are given in Sections 5 and 6, respectively. A simulation study is performed in Section 7. In Section 8, real data is used to demonstrate the findings, and finally the paper is concluded in Section 9.

\section{Model Description}

Suppose n units are put on a life-testing experiment and let $X_{1},X_{2},X_{3},…,X_{n}$ be their corresponding lifetimes. If $r \, (m\leq r\leq n)$ denotes the number of failures occur before $T^{*}$, then the available data will be of the form;
\begin{align*}
&\text{Case I }: \{X_{1:n}, X_{2:n},…, X_{n:n}\} \quad \text{if} \quad T^{*}=X_{n:n} \\
&\text{Case II}: \{X_{1:n}, X_{2:n},…, X_{r:n}\} \quad \text{if} \quad T^{*}= X_{m:n}+S.
\end{align*}
The forms of likelihood function corresponding to both cases are respectively,
\begin{equation*}
	L(x_{i:n};\theta) \propto \left\{ \begin{array}{lll}
		\prod_{i=1}^{n}f(x_{i:n};\theta) & if & T^{*}=X_{n:n}\\
		\prod_{i=1}^{r}f(x_{i:n};\theta) \left\{1-F(X_{m:n}+S)\right\}^{n-r} & if & T^{*}= X_{m:n}+S
	\end{array}
\right.
\end{equation*}

In the fields of life-testing and reliability, the weibull distribution is one of the most often used lifetime model. Based on the above expression, the log-likelihood function for the weibull distribution with density function 
\begin{equation}
f(x;\gamma,\delta)=\gamma \delta x^{\gamma-1}e^{-\delta x^{\gamma}}; \, x,\delta, \gamma>0,
\end{equation} and cumulative distribution function 
\begin{equation}
F(x;\gamma,\delta)=1-e^{-\delta x^{\gamma}}; \, x,\delta, \gamma>0,
\end{equation} is obtained as,

\begin{equation*}
	\resizebox{1\hsize}{!}{%
	$\log L(\theta) \propto \Bigg\{ \begin{array}{lll}
		n \log \gamma+ n \log \delta + (\gamma-1)\sum_{i=1}^{n}\log x_{i}-\delta \sum_{i=1}^{n} x_{i}^{\gamma} & ; & r=n\\ \\
r \log \gamma+r \log \delta+ (\gamma-1)\sum_{i=1}^{r}\log x_{i}-
\delta\sum_{i=1}^{r}x_{i}^{\gamma}- \delta(n-r)U^{\gamma} & ; & r=m, m+1,...,n-1,
	\end{array}$%
}
\end{equation*}
where $U=X_{m:n}+S$.

\section{Expected Duration of Experiment}
Let $K$ be the number of failures observed and $T^{*}$ denotes the duration of experiment under T1-T2 censoring scheme. Clearly $K$ and $T^{*}$ are random variables.\\
The distribution of $K$ is given by,

\begin{align*}
	P(K=r)=&\binom{n}{r}F(T^{*};\theta)^{r}(1-F(T^{*};\theta))^{n-r}; \quad m\leq r\leq n.
\end{align*}

Therefore, the expected value of $K$ is,
\begin{align*}
	E[K]=\sum_{r=m}^{n}r\binom{n}{r}F(T^{*};\theta)^{r}(1-F(T^{*};\theta))^{n-r}
\end{align*}

Now consider the expected value of the duration of experiment, 
\begin{align*}
	E[T^{*}]=&E[\min\{X_{n:n},X_{m:n}+S\}]\\
	=&E\left[X_{m:n}+S|X_{m:n}+S\leq X_{n:n}\right]P\left[X_{m:n}+S\leq X_{n:n}\right]+\\
	&E\left[X_{n:n}|X_{n:n}<X_{m:n}+S\right]P\left[X_{n:n}<X_{m:n}+S\right]
\end{align*}

One additional way of determining the expected duration of experiment is as follows.

\begin{align*}
	E[T^{*}]=&E[min\{X_{n:n},X_{m:n}+S\}]\\
	=&\int_{0}^{\infty}P(min\{X_{n:n},X_{m:n}+S\}\geq x)dx\\
	=&\int_{0}^{\infty}P(X_{n:n}\geq x)P(X_{m:n}+S\geq x) dx\\
	=&\int_{0}^{\infty}(1-F_{n:n}(x;\theta))(1-F_{m:n}(x+S))dx
\end{align*}

\begin{lemma}
	$E[K]$ is invariant under scalar multiplication of $T$.
\end{lemma}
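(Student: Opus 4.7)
The plan is to leverage the scale-family structure of the Weibull distribution. Writing $F(x;\gamma,\delta)=1-\exp(-\delta x^{\gamma})$, one sees that $F$ depends on $x$ and $\delta$ only through the dimensionless combination $\delta x^{\gamma}$. Concretely, the identity
\[
F(cx;\gamma,\delta)=1-e^{-\delta(cx)^{\gamma}}=F(x;\gamma,c^{\gamma}\delta)
\]
shows that multiplying the time variable by a scalar $c$ is equivalent to the reparameterization $\delta\mapsto c^{\gamma}\delta$.

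The second step is to observe that the closed-form expression for $E[K]$ derived above depends on the stopping time $T^{*}$ only through the single CDF value $F(T^{*};\theta)$. Because $T^{*}=\min\{X_{n:n},X_{m:n}+S\}$ is obtained by comparing quantities all measured in the time variable, a common rescaling of the time axis (hence of $T$, $S$, and the $X_{i:n}$) is compensated by $\delta\mapsto c^{\gamma}\delta$. Substituting the scaling identity into
\[
E[K]=\sum_{r=m}^{n} r\binom{n}{r} F(T^{*};\theta)^{r}\bigl(1-F(T^{*};\theta)\bigr)^{n-r}
\]
then gives a term-by-term equality with the original sum, which is the claimed invariance.

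The main obstacle is really interpretive rather than technical: the lemma refers to ``scalar multiplication of $T$,'' but $T$ itself never appears explicitly in the T1--T2 scheme, which uses only the supplementary time $S$ and the random stopping time $T^{*}$. The most natural reading is that $T$ is a stand-in for the time variable (equivalently, for $T^{*}$), and that the intended content is that $E[K]$ is invariant under a coupled rescaling of the time axis together with the Weibull scale parameter. Once that interpretation is fixed, the argument reduces to the one-line scaling identity above, and the remainder is purely bookkeeping through the binomial sum.
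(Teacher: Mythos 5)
Your reading of the lemma is reasonable, but your route differs from the paper's and contains one bookkeeping slip worth fixing. The paper's own proof is distribution-free and never touches the Weibull form: it reads $T$ as the generic unit lifetime, so that $F(T^{*},\theta)=P[T\le T^{*}]$, notes that rescaling the time variable sends $T^{*}$ to $\alpha T^{*}$ (implicitly this also requires $S\mapsto\alpha S$, since $T^{*}=\min\{X_{n:n},X_{m:n}+S\}$), and concludes $P[\alpha T\le\alpha T^{*}]=P[T\le T^{*}]$, so the success probability in the binomial expression for $K$, and hence $E[K]$, is unchanged. Your argument instead goes through the scale-equivariance of the Weibull family. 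That is a legitimate alternative, but it proves the statement only for the Weibull model, whereas the inequality-preservation argument is model-free, and it obliges you to track a parameter change that the paper's one-line argument avoids entirely.

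The slip: the identity $F(cx;\gamma,\delta)=F(x;\gamma,c^{\gamma}\delta)$ is correct, but substituting it with $x=T^{*}$ into the binomial sum produces $E[K]$ evaluated at the parameter value $c^{\gamma}\delta$, which is not the original sum, so the claimed ``term-by-term equality'' does not follow as written. What actually closes the argument is that rescaling the data changes the lifetime law to Weibull with scale parameter $c^{-\gamma}\delta$, and the rescaled evaluation point and the rescaled parameter cancel: $F(cT^{*};\gamma,c^{-\gamma}\delta)=1-e^{-c^{-\gamma}\delta\,c^{\gamma}(T^{*})^{\gamma}}=F(T^{*};\gamma,\delta)$. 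Your phrase ``compensated by $\delta\mapsto c^{\gamma}\delta$'' has the direction of the compensation reversed and never carries out this cancellation; once you state it explicitly (and also say that $S$ is rescaled along with the lifetimes, which both you and the paper need so that the termination time scales as $cT^{*}$), your proof is complete.
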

\begin{proof}
	Consider $T^{'}=\alpha T$, where $\alpha$ is the multiplication constant. Then $T^{*}$ changes to some $T^{**}$ and
	$P[T^{'}\leq T^{**}]=P[\alpha T\leq \alpha T^{*}]=P[T\leq T^{*}]=F(T^{*},\theta)$.
\end{proof}

\begin{lemma}
	Assume that $T^{+}$ represents the censoring time obtained by multiplying some constant $\alpha$ to $T$. Then, $E[T^{+}]=\alpha E[T^{*}]$.
\end{lemma}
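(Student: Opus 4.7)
The plan is to reduce the statement to a pathwise identity $T^{+}=\alpha T^{*}$ between the two termination-time random variables and then invoke linearity of expectation. The scalability of the underlying censoring mechanism should do all the real work, so there is very little analytical content once the setup is fixed.

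First I would make the scaling transformation precise, following the convention already used in the proof of the preceding lemma. Replacing the reference time $T$ by $T' = \alpha T$ for some $\alpha>0$ amounts to rescaling the lifetimes $X_{i:n}\mapsto \alpha X_{i:n}$ and the supplementary time $S\mapsto \alpha S$, since these are the only quantities entering the definition of the termination time. Under this scaling the new termination time becomes
\[
T^{+} \;=\; \min\{\alpha X_{n:n},\;\alpha X_{m:n}+\alpha S\}.
\]

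Next I would exploit the positive homogeneity of the minimum, $\min\{\alpha a,\alpha b\}=\alpha\min\{a,b\}$ for $\alpha>0$, to factor out the scalar:
\[
T^{+} \;=\; \alpha \,\min\{X_{n:n},\,X_{m:n}+S\} \;=\; \alpha T^{*}.
\]
Taking expectations on both sides and using linearity yields $E[T^{+}]=\alpha E[T^{*}]$. As an independent sanity check, I would re-derive the same conclusion from the tail-integral representation
\[
E[T^{*}]=\int_{0}^{\infty}(1-F_{n:n}(x))(1-F_{m:n}(x+S))\,dx
\]
obtained earlier: the distribution of $\alpha X$ is $F(x/\alpha)$, so under the substitution $x=\alpha y$ the integrand for $E[T^{+}]$ collapses to $\alpha$ times the integrand for $E[T^{*}]$.

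The only obstacle is notational rather than mathematical: one must pin down exactly what is being rescaled (all time-scale parameters $X_{i:n}$ and $S$ together, consistent with the earlier lemma's convention that $P[T'\le T^{**}]=P[T\le T^{*}]$) so that the identification $T^{+}=\alpha T^{*}$ is actually justified rather than merely postulated. Once that bookkeeping is settled, the remainder of the argument is a one-line application of linearity of expectation.
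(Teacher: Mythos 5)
Your proposal is correct, and it takes a somewhat cleaner route than the paper. The paper defines $T^{+}=\min\{\alpha X_{n:n},\alpha(X_{m:n}+S)\}$ (so your reading of the scaling convention — rescale the lifetimes and $S$ together — matches the paper's), but instead of a pathwise identity it decomposes $E[T^{+}]$ by conditioning on the two events $\{X_{m:n}+S\leq X_{n:n}\}$ and $\{X_{n:n}<X_{m:n}+S\}$, factors $\alpha$ out of each conditional expectation, and recombines to get $\alpha E[T^{*}]$ (mirroring the conditional decomposition of $E[T^{*}]$ given in Section 3). You instead use the positive homogeneity of the minimum, $\min\{\alpha a,\alpha b\}=\alpha\min\{a,b\}$ for $\alpha>0$, to get the almost-sure identity $T^{+}=\alpha T^{*}$ and then apply linearity of expectation; this avoids the case analysis entirely and is arguably the more transparent argument, since the conditional decomposition adds nothing once the scalar has been pulled out of the min. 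Your tail-integral cross-check via $E[T^{*}]=\int_{0}^{\infty}\bigl(1-F_{n:n}(x)\bigr)\bigl(1-F_{m:n}(x+S)\bigr)\,dx$ with the substitution $x=\alpha y$ is a useful independent verification not present in the paper. The only point worth making explicit in either version is the standing assumption $\alpha>0$, which both your homogeneity step and the paper's inequality-preservation step ($\alpha(X_{m:n}+S)\leq\alpha X_{n:n}$ iff $X_{m:n}+S\leq X_{n:n}$) silently require.
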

\begin{proof}
	\begin{align*}
		E[T^{+}]=&E[min\{\alpha X_{n:n},\alpha(X_{m:n}+S)\}]\\
		=&E[\alpha(X_{m:n}+S)\;|\;\alpha(X_{m:n}+S)\leq\alpha X_{n:n}] \; P[\alpha(X_{m:n}+S)\leq \alpha X_{n:n}]+\\
		&E[\alpha X_{n:n}\;|\;\alpha X_{n:n}<\alpha(X_{m:n}+S)] \; P[\alpha X_{n:n}<\alpha(X_{m:n}+S)]\\
		=&\alpha E[X_{m:n}+S\;|\;X_{m:n}+S\leq X_{n:n}] \; P[X_{m:n}+S\leq X_{n:n}]+\\
		&E[X_{n:n}\;|\; X_{n:n}<X_{m:n}+S] \; P[X_{n:n}<X_{m:n}+S]\\
		=&\alpha E[T^{*}]
	\end{align*}
\end{proof}

\section{Fisher Information}
Effron and Johnstone \cite{efron1990fisher} shown that the Fisher information about its parameters of a continuous random variable can be expressed in terms of hazard function of the random variable. Park and Balakrishnan \cite{park2009simple} derived Fisher information in different hybrid censoring schemes. Based on this, we can derive Fisher information in the T1T2 censoring scheme.

\[ I(\gamma,\delta)= 	\left[ \begin{array}{cc}
	I_{11} & I_{12} \\
	I_{21} & I_{22}
\end{array}\right]
\]
where
\begin{align*}
	I_{11}=&\int_{0}^{\infty}\biggl\{\frac{\partial}{\partial\gamma}\log h(t)\biggr\}^{T} \biggl\{\frac{\partial}{\partial\gamma}\log h(t)\biggr\}\sum_{i=1}^{r}f_{i:n}(t;\gamma,\delta)dt\\
	=&\int_{0}^{\infty}\biggl\{\frac{1}{\gamma}+\log t\biggr\}^{2}\sum_{i=1}^{r}f_{i:n}(t;\gamma,\delta)dt\\
	I_{12}=&\int_{0}^{\infty}\biggl\{\frac{\partial}{\partial\gamma}\log h(t)\biggr\}^{T} \biggl\{\frac{\partial}{\partial\delta}\log h(t)\biggr\}\sum_{i=1}^{r}f_{i:n}(t;\gamma,\delta)dt\\
	=&\frac{1}{\delta}\int_{0}^{\infty}\biggl\{\frac{1}{\gamma}+\log t\biggr\}\sum_{i=1}^{r}f_{i:n}(t;\gamma,\delta)dt\\
	=&I_{21}\\
	I_{22}=&\int_{0}^{\infty}\biggl\{\frac{\partial}{\partial\delta}\log h(t)\biggr\}^{T}\biggl\{\frac{\partial}{\partial\delta}\log h(t)\biggr\}\sum_{i=1}^{r}f_{i:n}(t;\gamma,\delta)dt\\
	=&\int_{0}^{\infty}\left(\frac{1}{\delta}\right)^{2}\sum_{i=1}^{r}f_{i:n}(t;\gamma,\delta)dt\\
\text{and} \quad	f_{i:n}(t;\gamma,\delta)=&i\binom{n}{i}\gamma\delta t^{\gamma-1}e^{-\delta t^{\gamma}(n-i+1)}\left(1-e^{-\delta t^{\gamma}}\right)^{i-1}.
\end{align*}

\section{Maximum Likelihood Estimation}

The derivatives with respect to $\gamma$ and $\delta$ are respectively given by,

\begin{equation*}
	\resizebox{1.1\hsize}{!}{%
		$\frac{\partial \log L}{\partial \gamma} = \Bigg\{ \begin{array}{lll}
			\frac{n}{\gamma}+\sum_{i=1}^{n}\log x_{i}-\delta \sum_{i=1}^{n}x_{i}^{\gamma}\log x_{i} & ; & r=n\\ \\
			\frac{r}{\gamma}+\sum_{i=1}^{r}\log x_{i}-\delta\sum_{i=1}^{r}x_{i}^{\gamma}\log x_{i}-\delta(n-r)U^{\gamma}\log U & ; & r=m, m+1,...,n-1.
		\end{array}$%
	}
\end{equation*}

\begin{equation*}
		\frac{\partial \log L}{\partial \delta} = \Bigg\{ \begin{array}{lll}
			\frac{n}{\delta}-\sum_{i=1}^{n}x_{i}^{\gamma} & ; & r=n\\ \\
			\frac{r}{\delta}-\sum_{i=1}^{r}x_{i}^{\gamma}-(n-r)U^{\gamma} & ; & r=m, m+1,...,n-1.
		\end{array}
\end{equation*}

Now, combining Case I and Case II and equating to zero, the above equations become,

\begin{align}
	\frac{\partial\log L}{\partial \gamma}&=\frac{w}{\gamma}+\sum_{i=1}^{w}\log x_{i}-\delta P=0\\
	\frac{\partial\log L}{\partial\delta}&=\frac{w}{\delta}-Q=0
\end{align}
where $w=n$, $P=\sum_{i=1}^{n}x_{i}^{\gamma}\log x_{i}$ and $Q=\sum_{i=1}^{n}x_{i}^{\gamma}$ for case I and $w=r$, $P=\sum_{i=1}^{r}x_{i}^{\gamma}\log x_{i}+(n-r)U^{\gamma}\log U$ and $Q=\sum_{i=1}^{r}x_{i}^{\gamma}+(n-r)U^{\gamma}$ for case II.

From (2), we get
\begin{equation}
	\hat{\delta}=\frac{w}{Q}=v(\gamma), \text{(say)}.
\end{equation}
So that
\begin{equation}
	\hat{\gamma}=\frac{w}{v(\gamma)P-\sum_{i=1}^{w}\log x_{i}}=u(\gamma)
\end{equation}
where $v(\gamma)$ and $u(\gamma)$ are represents the functions of $\gamma$.

The MLEs, $\hat{\gamma}$ and $\hat{\delta}$ can be obtained by an iterative scheme proposed by Kundu \cite{kundu2007hybrid}.

\section{Bayes Estimation}
In this part, we investigate the Bayes estimates of the unknown parameters $\gamma$ and $\delta$ while using squared error loss function under T1-T2 censoring scheme. Here we assume that the prior distributions of $\gamma$ and $\delta$ are follows independent $Gamma(\alpha_{1},\beta_{1})$ and $Gamma(\alpha_{2},\beta_{2})$, respectively, where $\alpha_{1}, \alpha_{2}, \beta_{1}, \& \beta_{2}$ are the hyper parameters. When $\gamma$ and $\delta$ are both unknown, their conjugate prior won't exist. Therefore, gamma priors can be used in these circumstances because they are quite flexible and also take into account noninformative priors. 
Thus the independent prior distributions of $\gamma$ and $\delta$ are,

\begin{equation*}
	\pi_{1}(\gamma)\propto\gamma^{\alpha_{1}-1}e^{-\beta_{1}\gamma}; \, \gamma\geq0, \, \alpha_{1}, \beta_{1}>0,
\end{equation*}
\begin{equation*}
	\pi_{2}(\delta)\propto\delta^{\alpha_{2}-1}e^{-\beta_{2}\delta}; \, \delta\geq0, \alpha_{2}, \beta_{2}>0.
\end{equation*}

The joint prior distribution of the parameters $\gamma$ and $\delta$ is,

\begin{equation}
	\pi(\gamma,\delta)\propto\gamma^{\alpha_{1}-1}\delta^{\alpha_{2}-1}e^{-\beta_{1}\gamma-\beta_{2}\delta}; \, \gamma, \delta\geq0, \alpha_{1}, \alpha_{2}, \beta_{1}, \beta_{2}>0
\end{equation}

Using Bayes Theorem, the joint posterior density function of $\gamma$ and $\delta$ is given by
\begin{equation*}
	\pi(\gamma,\delta|\underline{X})\propto l(\gamma,\delta|\underline{X})\pi(\gamma,\delta)
\end{equation*}
\begin{equation}
	\pi(\gamma,\delta|\underline{X})\propto \gamma^{\alpha_{1}+w-1}\delta^{\alpha_{2}+w-1}\prod_{i=1}^{w}x_{i}^{\gamma-1}e^{-\delta\left[\beta_{2}+\sum_{i=1}^{w}x_{i}^{\gamma}+(n-w)U^{\gamma}\right]-\beta_{1}\gamma}
\end{equation}

To determine the Bayes estimates, we take into account both symmetric and asymmetric loss functions. For the symmetric loss function, the squared error loss function $(L_{SE})$ is considered. The $L_{SE}$ can be defined as,
\begin{equation}
	L_{SE}\left(g(\theta),\hat{g}(\theta)\right)=\left(g(\theta)-\hat{g}(\theta)\right)^{2},
\end{equation}
where $\hat{g}(\theta)$ is the estimate of $g(\theta)$.\\
For the asymmetric loss function, the LINEX loss function $(L_{LL})$ is considered. The $L_{LL}$ can be defined as,
\begin{equation}
	L_{LL}\big(g(\theta),\hat{g}(\theta)\big)=e^{d\left(g(\theta)-\hat{g}(\theta)\right)}-d\left(g(\theta)-\hat{g}(\theta)\right)-1, \quad d\neq0,
\end{equation}
where $d$ is the loss parameter. For the LINEX loss function, the Bayes estimate $\hat{\theta}_{LL}$ can be calculated as, 
\begin{equation}
	\hat{\theta}_{LL}=-\frac{1}{d}\log\left\{E_{\theta}\left(e^{-d\theta}\right)\right\},
\end{equation}
where the expectation exist and finite.

The conditional distribution of the unknown parameters,

\begin{equation}
	\pi(\gamma|\delta,\underline{X})\propto\gamma^{\alpha_{1}+w-1}\prod_{i=1}^{w}x_{i}^{\gamma-1}e^{-\delta\left[\sum_{i=1}^{w}x_{i}^{\gamma}+(n-w)U^{\gamma}\right]-\beta_{1}\gamma}
\end{equation}

\begin{equation}
	\pi(\delta|\gamma,\underline{X})\propto \text{Gamma}\left(\alpha_{2}+w,\beta_{2}+\sum_{i=1}^{w}x_{i}^{\gamma}+(n-w)U^{\gamma}\right)
\end{equation}
As the conditional distribution of $\gamma$ lacks an explicit form, we need to utilize the Metropolis-Hastings (M-H) technique to calculate Bayes estimates and to construct HPD credible intervals.

\subsection{M-H Algorithm}
The M-H algorithm proposed by Metropolis et al. \cite{metropolis1953equation} and Hasting \cite{hastings1970monte} is used to generate Bayes estimates of $\gamma$ and $\delta$. This method can be extremely helpful for constructing posterior samples using random proposal distributions when the posterior density is operationally infeasible. Based on the constructed posterior samples, Bayesian inference can be drawn regarding the unknown parameters $\gamma$ and $\delta$. The following steps can be used to construct the posterior samples.\\

STEPS\\
\textbf{Step 1:} Choose an initial value $(\gamma^{0},\delta^{0})$, and let $\sigma=\gamma^{i-1}$.\\
\textbf{Step 2:} Proposing a new point $\gamma^{i}$ from $\pi_{1}(\gamma|\delta^{0},\underline{X})$ from the proposing distribution $g(\gamma)\equiv N(\gamma^{0}1), \gamma>0$.\\
\textbf{Step 3:} Generate a random sample $u$ from Uniform(0,1) and a random sample $\psi$ from the proposal distribution. If $u\leq\frac{\pi(\psi)g(\sigma)}{\pi(\sigma)g(\psi)}$, then set $\gamma=\psi$. Here, the acceptance and rejection probabilities of $\psi$ are $min\left\{1,\frac{\pi(\psi)g(\sigma)}{\pi(\sigma)g(\psi)}\right\}$ and $1-min\left\{1,\frac{\pi(\psi)g(\sigma)}{\pi(\sigma)g(\psi)}\right\}$, respectively.\\
\textbf{Step 4:} Generate $\delta^{i}$'s from Gamma$\left(\alpha_{2}+w,\beta_{2}+\sum_{i=1}^{w}x_{i}^{\gamma}+(n-w)U^{\gamma}\right)$, as in step 3.\\
\textbf{Step 5:} Repeat steps 2 to 4, N times and obtained $(\gamma_{1},\delta_{1}), (\gamma_{2},\delta_{2}),..., (\gamma_{N},\delta_{N})$.\\

To ensure convergence and eliminate the impact of initial value selection, the first $M$ estimates are excluded. The remaining sample, consisting of $\gamma^{i}$ and $\delta^{i}$ with $i$ ranging from $M+1$ to $N$, where $N$ is sufficiently large, constitutes an estimated posterior sample that can be utilized for Bayesian inference. The $100(1-\alpha)\%$ HPD credible interval of the unknown parameter $\gamma$ can be constructed from the ordered samples $(\gamma_{(M+1)}, \gamma_{(M+2)},...,\gamma_{(N)})$by selecting the smallest interval, and similarly for $\delta$.

\section{Simulation Study}
This section is dedicated to presenting experimental results, primarily focused on examining the behavior of various methods under different sample sizes and censoring schemes for various values of parameters of $f(x;\gamma,\delta)$ under 
T1-T2 censoring scheme. The statistical software R is used to perform the computations.
\par
A MCMC approach is used to generate samples for different values of $(n, m, S)$ in order to assess the performance of MLE. Without loss of generality, we choose two different values for the parametres, $(\gamma, \delta)$ = (1, 1) and (1.5, 2). The censoring schemes $(n,m)$ = (100, 100), (100, 90), (100, 85), (60, 60), (60, 55), (60, 50), (60, 45), (30, 30), (30, 25), (30, 20), (30, 15), (15, 15), (15, 12), (15, 10), and (15, 7) are chosen for the supplymentary times $S=0.1$ and $S=0.2$ with 1000 replications. Bias and mean squared error (MSE) are computed for the unknown parameters in order to examine the effectiveness of MLE, and the results are given in the Table \ref*{mle}. As can be seen from Table \ref{mle}, the MLE of parameters performs pretty satisfactorily with respect to bias and MSE. The MSE value is declining as sample size increases, indicating that the MLEs of the unknown parameters are consistent. Moreover, we can observe that the MSE value obtained for S equal to 0.2 is smaller compared to the MSE value obtained for S equal to 0.1. As the supplementary time increases, there is a greater chance of more failures occurring, which, in turn, leads to more precise statistical results.
The average confidence length (CL) and coverage probability (CP) based on $95 \%$ confidence interval are determined and the results are given in Table \ref{ci1} and \ref{ci2}. As the sample size grows, the value of CL gets shorter, and the CP remains at its nominal level.
\par
The Bayes estimates for squared error and the LINEX loss functions are derived using the M-H algorithm. Under LINEX, the arbitrarily chosen values of the loss parameter $d$ are -1 and 1. For the prior distribution, hyperparameters are chosen in such a way that the corresponding prior means are somewhat close to the actual parameter values. Hence, the selected values are $(\alpha_{1}=1, \beta_{1}=1, \alpha_{2}=1, \beta_{2}=1)$ and $(\alpha_{1}=2.25, \beta_{1}=1.5, \alpha_{2}=4, \beta_{2}=2)$ (see, Kundu \cite{kundu2008bayesian}). Tables \ref{bayes1}-\ref{bayes2} show the estimated bias and risk for various censoring schemes under the aforementioned loss functions. When we compare the bias and risk of each estimate, it is clear that the estimates obtained under the LINEX loss function perform better. The values of bias and risk are decreasing with the increase in the effective sample size. Table 6 displays the CL and CP based on the $95\%$ HPD credible interval (HPDC). As the effective sample size rises, the CL of HPDC becomes narrower. Meanwhile, the nominal level of the CP is maintained. Comparing CL of HPDC with MLE, it is clear that CL of HPDC performs better for small sample sizes while they both have values that are nearly equivalent for higher sample sizes.

\begin{table}[htbp]
	\centering
	\caption{Average values of the biases and MSEs for the maximum likelihood estimates of ($\gamma$, $\delta$) = (1, 1) and (1.5, 2) under different censoring schemes.}
	\resizebox{15cm}{!}{
		\renewcommand*{\arraystretch}{1.5}
	\begin{tabular}{c|cccc|cccc}
		\hline
		\multirow{2}[1]{*}{Censoring Scheme} & \multicolumn{4}{c|}{\multirow{2}[1]{*}{$\gamma$=1, $\delta$=1}} & \multicolumn{4}{c}{\multirow{2}[1]{*}{$\gamma$=1.5, $\delta$=2}} \\
		& \multicolumn{4}{c|}{}         & \multicolumn{4}{c}{} \\
		\cline{2-9}        & \multicolumn{1}{l}{Bias($\gamma$)} & \multicolumn{1}{l}{Bias($\delta$)} & \multicolumn{1}{l}{MSE($\gamma$)} & \multicolumn{1}{l|}{MSE($\delta$)} & \multicolumn{1}{l}{Bias($\gamma$)} & \multicolumn{1}{l}{Bias($\delta$)} & \multicolumn{1}{l}{MSE($\gamma$)} & \multicolumn{1}{l}{MSE($\delta$)} \\
		\hline
		    n=100, m=100, S=0.1 & 0.0101 & 0.0077 & 0.0064 & 0.0115 & 0.0151 & 0.0291 & 0.0144 & 0.0466 \\
		n=100, m=90, S=0.1 & 0.0112 & 0.0096 & 0.0077 & 0.0122 & 0.0152 & 0.0329 & 0.0168 & 0.0550 \\
		n=100, m=85, S=0.1 & 0.0122 & 0.0109 & 0.0083 & 0.0126 & 0.0169 & 0.0370 & 0.0178 & 0.0582 \\
		n=100, m=80, S=0.1 & 0.0125 & 0.0118 & 0.0089 & 0.0131 & 0.0174 & 0.0401 & 0.0188 & 0.0640 \\
		n=60, m=60, S=0.1 & 0.0189 & 0.0080 & 0.0111 & 0.0183 & 0.0284 & 0.0415 & 0.0251 & 0.0751 \\
		n=60, m=55, S=0.1 & 0.0219 & 0.0109 & 0.0135 & 0.0192 & 0.0306 & 0.0489 & 0.0294 & 0.0865 \\
		n=60, m=50, S=0.1 & 0.0264 & 0.0153 & 0.0161 & 0.0205 & 0.0365 & 0.0641 & 0.0333 & 0.1009 \\
		n=60, m=45, S=0.1 & 0.0306 & 0.0210 & 0.0192 & 0.0237 & 0.0397 & 0.0760 & 0.0386 & 0.1240 \\
		n=30, m=30, S=0.1 & 0.0472 & 0.0284 & 0.0274 & 0.0454 & 0.0709 & 0.1234 & 0.0616 & 0.2104 \\
		n=30, m=25, S=0.1 & 0.0577 & 0.0436 & 0.0387 & 0.0576 & 0.0814 & 0.1659 & 0.0799 & 0.3046 \\
		n=30, m=20, S=0.1 & 0.0805 & 0.0811 & 0.0568 & 0.0988 & 0.1008 & 0.2485 & 0.1069 & 0.5430 \\
		n=30, m=15, S=0.1 & 0.1105 & 0.1468 & 0.0853 & 0.1876 & 0.1388 & 0.4397 & 0.1579 & 1.3586 \\
		n=15, m=15, S=0.1 & 0.0992 & 0.0558 & 0.0722 & 0.1133 & 0.1488 & 0.2633 & 0.1625 & 0.6763 \\
		n=15, m=12, S=0.1 & 0.1294 & 0.1062 & 0.1107 & 0.1910 & 0.1739 & 0.3879 & 0.2191 & 1.2546 \\
		n=15, m=10, S=0.1 & 0.1640 & 0.1719 & 0.1501 & 0.2932 & 0.2090 & 0.5749 & 0.2843 & 2.4203 \\
		n=15, m=7, S=0.1 & 0.2473 & 0.4116 & 0.2840 & 1.2957 & 0.2950 & 1.1548 & 0.4437 & 8.1042 \\
		n=100, m=100, S=0.2 & 0.0101 & 0.0077 & 0.0064 & 0.0115 & 0.0151 & 0.0291 & 0.0144 & 0.0466 \\
		n=100, m=90, S=0.2 & 0.0104 & 0.0090 & 0.0076 & 0.0120 & 0.0126 & 0.0281 & 0.0158 & 0.0524 \\
		n=100, m=85, S=0.2 & 0.0108 & 0.0097 & 0.0080 & 0.0124 & 0.0142 & 0.0313 & 0.0168 & 0.0548 \\
		n=100, m=80, S=0.2 & 0.0124 & 0.0114 & 0.0086 & 0.0128 & 0.0149 & 0.0333 & 0.0177 & 0.0585 \\
		n=60, m=60, S=0.2 & 0.0189 & 0.0080 & 0.0111 & 0.0183 & 0.0284 & 0.0415 & 0.0251 & 0.0751 \\
		n=60, m=55, S=0.2 & 0.0209 & 0.0101 & 0.0134 & 0.0191 & 0.0275 & 0.0426 & 0.0281 & 0.0817 \\
		n=60, m=50, S=0.2 & 0.0246 & 0.0139 & 0.0153 & 0.0202 & 0.0325 & 0.0537 & 0.0305 & 0.0904 \\
		n=60, m=45, S=0.2 & 0.0279 & 0.0178 & 0.0179 & 0.0222 & 0.0345 & 0.0620 & 0.0335 & 0.1050 \\
		n=30, m=30, S=0.2 & 0.0472 & 0.0284 & 0.0274 & 0.0454 & 0.0709 & 0.1234 & 0.0616 & 0.2104 \\
		n=30, m=25, S=0.2 & 0.0546 & 0.0399 & 0.0367 & 0.0542 & 0.0742 & 0.1435 & 0.0741 & 0.2639 \\
		n=30, m=20, S=0.2 & 0.0687 & 0.0619 & 0.0496 & 0.0748 & 0.0817 & 0.1811 & 0.0886 & 0.3714 \\
		n=30, m=15, S=0.2 & 0.0930 & 0.1071 & 0.0704 & 0.1271 & 0.1000 & 0.2584 & 0.1152 & 0.5804 \\
		n=15, m=15, S=0.2 & 0.0992 & 0.0558 & 0.0722 & 0.1133 & 0.1488 & 0.2633 & 0.1625 & 0.6763 \\
		n=15, m=12, S=0.2 & 0.1197 & 0.0896 & 0.1032 & 0.1608 & 0.1536 & 0.3106 & 0.1910 & 0.8809 \\
		n=15, m=10, S=0.2 & 0.1428 & 0.1323 & 0.1300 & 0.2247 & 0.1660 & 0.3807 & 0.2210 & 1.1855 \\
		n=15, m=7, S=0.2 & 0.1901 & 0.2315 & 0.1960 & 0.4099 & 0.2008 & 0.5661 & 0.2876 & 2.1479 \\
		\hline
	\end{tabular}%
}
	\label{mle}%
\end{table}%

\begin{table}[htbp]
	\centering
	\caption{Average values of the biases and Risks for the Bayesian estimates of parameters ($\gamma$, $\delta$) = (1, 1) under different censoring schemes.}
	\resizebox{15cm}{!}{
		\renewcommand*{\arraystretch}{1.5}
	\begin{tabular}{lcc|cc|cc|cc|cc|cc}
		\hline
		\multicolumn{1}{c}{\multirow{4}[7]{*}{$\gamma$=1, $\delta$=1}} & \multicolumn{4}{c}{SE}        & \multicolumn{8}{c}{LL} \\
		\cmidrule{2-13}          & \multicolumn{2}{c}{\multirow{2}[4]{*}{Bias}} & \multicolumn{2}{c}{\multirow{2}[4]{*}{Risk}} & \multicolumn{4}{c}{Bias}      & \multicolumn{4}{c}{Risk} \\
		\cmidrule{6-13}          & \multicolumn{2}{c}{} & \multicolumn{2}{c}{} & \multicolumn{2}{c}{v=-1} & \multicolumn{2}{c}{v=1} & \multicolumn{2}{c}{v=-1} & \multicolumn{2}{c}{v=1} \\
		\cmidrule{2-13}          & \multicolumn{1}{l}{$\gamma$} & \multicolumn{1}{l}{$\delta$} & \multicolumn{1}{l}{$\gamma$} & \multicolumn{1}{l}{$\delta$} & \multicolumn{1}{l}{$\gamma$} & \multicolumn{1}{l}{$\delta$} & \multicolumn{1}{l}{$\gamma$} & \multicolumn{1}{l}{$\delta$} & \multicolumn{1}{l}{$\gamma$} & \multicolumn{1}{l}{$\delta$} & \multicolumn{1}{l}{$\gamma$} & \multicolumn{1}{l}{$\delta$} \\
		\hline
		n=100, m=100, S=0.1 & -0.1807 & 0.0245 & 0.0402 & 0.0107 & 0.0038 & 0.0051 & -0.0037 & -0.0049 & 0.0185 & 0.0054 & 0.0219 & 0.0053 \\
		n=100, m=90, S=0.1 & -0.2151 & 0.0367 & 0.0544 & 0.0157 & 0.0041 & 0.0075 & -0.0041 & -0.0069 & 0.0248 & 0.0078 & 0.0299 & 0.0080 \\
		n=100, m=85, S=0.1 & -0.1917 & 0.0409 & 0.0457 & 0.0145 & 0.0044 & 0.0066 & -0.0045 & -0.0062 & 0.0210 & 0.0073 & 0.0249 & 0.0072 \\
		n=100, m=80, S=0.1 & -0.1923 & 0.0264 & 0.0458 & 0.0128 & 0.0044 & 0.0062 & -0.0045 & -0.0059 & 0.0210 & 0.0065 & 0.0251 & 0.0064 \\
		&       &       &       &       &       &       &       &       &       &       &       &  \\
		n=60, m=60, S=0.1 & -0.1385 & -0.0012 & 0.0270 & 0.0154 & 0.0039 & 0.0077 & -0.0040 & -0.0076 & 0.0126 & 0.0076 & 0.0145 & 0.0078 \\
		n=60, m=55, S=0.1 & -0.1854 & -0.0077 & 0.0498 & 0.0223 & 0.0077 & 0.0112 & -0.0077 & -0.0111 & 0.0226 & 0.0111 & 0.0276 & 0.0113 \\
		n=60, m=50, S=0.1 & -0.1912 & -0.0158 & 0.0551 & 0.0243 & 0.0094 & 0.0122 & -0.0092 & -0.0118 & 0.0248 & 0.0118 & 0.0309 & 0.0126 \\
		n=60, m=45, S=0.1 & -0.1458 & 0.0050 & 0.0406 & 0.0263 & 0.0098 & 0.0135 & -0.0095 & -0.0127 & 0.0184 & 0.0129 & 0.0225 & 0.0136 \\
		&       &       &       &       &       &       &       &       &       &       &       & \\
		n=30, m=30, S=0.1 & 0.0065 & 0.0269 & 0.0206 & 0.0409 & 0.0103 & 0.0203 & -0.0102 & -0.0200 & 0.0103 & 0.0211 & 0.0103 & 0.0203 \\
		n=30, m=25, S=0.1 & 0.0175 & 0.0774 & 0.0257 & 0.0404 & 0.0126 & 0.0175 & -0.0128 & -0.0170 & 0.0133 & 0.0216 & 0.0126 & 0.0192 \\
		n=30, m=20, S=0.1 & -0.1085 & -0.0158 & 0.0527 & 0.0525 & 0.0209 & 0.0273 & -0.0201 & -0.0251 & 0.0239 & 0.0252 & 0.0297 & 0.0282 \\
		n=30, m=15, S=0.1 & -0.2812 & -0.3702 & 0.1647 & 0.3088 & 0.0456 & 0.0999 & -0.0408 & -0.0768 & 0.0675 & 0.1160 & 0.1053 & 0.2299 \\
		&       &       &       &       &       &       &       &       &       &       &       &  \\
		n=15, m=15, S=0.1 & -0.1103 & 0.0076 & 0.0669 & 0.0664 & 0.0286 & 0.0355 & -0.0261 & -0.0313 & 0.0296 & 0.0321 & 0.0388 & 0.0359 \\
		n=15, m=12, S=0.1 & -0.1113 & -0.0013 & 0.0768 & 0.0744 & 0.0334 & 0.0392 & -0.0311 & -0.0355 & 0.0343 & 0.0361 & 0.0444 & 0.0400 \\
		n=15, m=10, S=0.1 & -0.1843 & -0.0774 & 0.1151 & 0.0930 & 0.0417 & 0.0446 & -0.0396 & -0.0425 & 0.0496 & 0.0431 & 0.0693 & 0.0524 \\
		n=15, m=7, S=0.1 & -0.2589 & -0.2617 & 0.1997 & 0.3329 & 0.0685 & 0.1587 & -0.0641 & -0.1144 & 0.0819 & 0.1247 & 0.1285 & 0.2608 \\
		&       &       &       &       &       &       &       &       &       &       &       &  \\
		n=100, m=100, S=0.2 & -0.1807 & 0.0245 & 0.0302 & 0.0107 & 0.0038 & 0.0051 & -0.0037 & -0.0049 & 0.0185 & 0.0054 & 0.0219 & 0.0053 \\
		n=100, m=90, S=0.2 & -0.2043 & 0.0181 & 0.0326 & 0.0136 & 0.0055 & 0.0069 & -0.0044 & -0.0064 & 0.0190 & 0.0067 & 0.0221 & 0.0069 \\
		n=100, m=85, S=0.2 & -0.1738 & 0.0333 & 0.0493 & 0.0147 & 0.0046 & 0.0070 & -0.0045 & -0.0066 & 0.0187 & 0.0075 & 0.0215 & 0.0073 \\
		n=100, m=80, S=0.2 & -0.1962 & 0.0307 & 0.0589 & 0.0147 & 0.0058 & 0.0071 & -0.0052 & -0.0067 & 0.0223 & 0.0074 & 0.0269 & 0.0074 \\
		&       &       &       &       &       &       &       &       &       &       &       &  \\
		n=60, m=60, S=0.2 & -0.1385 & -0.0012 & 0.0270 & 0.0154 & 0.0039 & 0.0077 & -0.0040 & -0.0076 & 0.0126 & 0.0076 & 0.0145 & 0.0078 \\
		n=60, m=55, S=0.2 & -0.1589 & -0.0064 & 0.0364 & 0.0188 & 0.0055 & 0.0095 & -0.0056 & -0.0093 & 0.0167 & 0.0093 & 0.0198 & 0.0096 \\
		n=60, m=50, S=0.2 & -0.1748 & -0.0191 & 0.0397 & 0.0244 & 0.0091 & 0.0117 & -0.0087 & -0.0113 & 0.0167 & 0.0116 & 0.0197 & 0.0130 \\
		n=60, m=45, S=0.2 & -0.1818 & 0.0203 & 0.0399 & 0.0259 & 0.0079 & 0.0129 & -0.0076 & -0.0125 & 0.0150 & 0.0130 & 0.0281 & 0.0130 \\
		&       &       &       &       &       &       &       &       &       &       &       &  \\
		n=30, m=30, S=0.2 & 0.0065 & 0.0269 & 0.0206 & 0.0409 & 0.0103 & 0.0103 & -0.0102 & -0.0110 & 0.0103 & 0.0211 & 0.0103 & 0.0203 \\
		n=30, m=25, S=0.2 & 0.0480 & 0.0891 & 0.0264 & 0.0450 & 0.0121 & 0.0186 & -0.0120 & -0.0185 & 0.0138 & 0.0244 & 0.0127 & 0.0210 \\
		n=30, m=20, S=0.2 & -0.1038 & -0.0157 & 0.0502 & 0.0429 & 0.0201 & 0.0221 & -0.0195 & -0.0206 & 0.0229 & 0.0206 & 0.0281 & 0.0228 \\
		n=30, m=15, S=0.2 & -0.1553 & -0.1185 & 0.0899 & 0.0976 & 0.0341 & 0.0426 & -0.0319 & -0.0407 & 0.0392 & 0.0437 & 0.0533 & 0.0563 \\
		&       &       &       &       &       &       &       &       &       &       &       &  \\
		n=15, m=15, S=0.2 & -0.1103 & 0.0076 & 0.0669 & 0.0664 & 0.0286 & 0.0355 & -0.0261 & -0.0313 & 0.0296 & 0.0321 & 0.0388 & 0.0359 \\
		n=15, m=12, S=0.2 & -0.1147 & 0.0149 & 0.0814 & 0.0690 & 0.0348 & 0.0365 & -0.0336 & -0.0328 & 0.0368 & 0.0339 & 0.0465 & 0.0368 \\
		n=15, m=10, S=0.2 & -0.1636 & -0.0467 & 0.1055 & 0.0775 & 0.0399 & 0.0388 & -0.0387 & -0.0366 & 0.0462 & 0.0366 & 0.0620 & 0.0426 \\
		n=15, m=7, S=0.2 & -0.1570 & -0.0917 & 0.1355 & 0.1877 & 0.0585 & 0.0997 & -0.0532 & -0.0819 & 0.0584 & 0.0819 & 0.0835 & 0.1193 \\
		\hline
	\end{tabular}%
}
	\label{bayes1}%
\end{table}%

\begin{table}[htbp]
	\centering
	\caption{Average values of the biases and Risks for the Bayesian estimates of parameters ($\gamma$, $\delta$) = (1.5, 2) under different censoring schemes.}
		\resizebox{15cm}{!}{
		\renewcommand*{\arraystretch}{1.5}
	\begin{tabular}{lcc|cc|cc|cc|cc|cc}
		\hline
		\multicolumn{1}{c}{\multirow{4}[7]{*}{$\gamma$=1.5, $\delta$=2}} & \multicolumn{4}{c}{SE}        & \multicolumn{8}{c}{LL} \\
		\cmidrule{2-13}          & \multicolumn{2}{c}{\multirow{2}[4]{*}{Bias}} & \multicolumn{2}{c}{\multirow{2}[4]{*}{Risk}} & \multicolumn{4}{c}{Bias}      & \multicolumn{4}{c}{Risk} \\
		\cmidrule{6-13}          & \multicolumn{2}{c}{} & \multicolumn{2}{c}{} & \multicolumn{2}{c}{v=-1} & \multicolumn{2}{c}{v=1} & \multicolumn{2}{c}{v=-1} & \multicolumn{2}{c}{v=1} \\
		\cmidrule{2-13}          & \multicolumn{1}{l}{$\gamma$} & \multicolumn{1}{l}{$\delta$} & \multicolumn{1}{l}{$\gamma$} & \multicolumn{1}{l}{$\delta$} & \multicolumn{1}{l}{$\gamma$} & \multicolumn{1}{l}{$\delta$} & \multicolumn{1}{l}{$\gamma$} & \multicolumn{1}{l}{$\delta$} & \multicolumn{1}{l}{$\gamma$} & \multicolumn{1}{l}{$\delta$} & \multicolumn{1}{l}{$\gamma$} & \multicolumn{1}{l}{$\delta$} \\
		\hline
		n=100, m=100, S=0.1 & -0.3000 & -0.2028 & 0.1100 & 0.0931 & 0.0100 & 0.0265 & -0.0100 & -0.0255 & 0.0483 & 0.0404 & 0.0635 & 0.0549 \\
		n=100, m=90, S=0.1 & -0.2921 & -0.2416 & 0.1045 & 0.1167 & 0.0098 & 0.0297 & -0.0095 & -0.0287 & 0.0459 & 0.0499 & 0.0603 & 0.0700 \\
		n=100, m=85, S=0.1 & -0.2543 & -0.1387 & 0.0880 & 0.0796 & 0.0116 & 0.0303 & -0.0117 & -0.0300 & 0.0389 & 0.0357 & 0.0503 & 0.0455 \\
		n=100, m=80, S=0.1 & -0.2664 & -0.1790 & 0.0962 & 0.0828 & 0.0128 & 0.0265 & -0.0124 & -0.0245 & 0.0421 & 0.0358 & 0.0556 & 0.0491 \\
		&       &       &       &       &       &       &       &       &       &       &       &  \\
		n=60, m=60, S=0.1 & -0.2141 & -0.1919 & 0.0733 & 0.1100 & 0.0138 & 0.0381 & -0.0136 & -0.0353 & 0.0325 & 0.0470 & 0.0419 & 0.0667 \\
		n=60, m=55, S=0.1 & -0.2368 & -0.2132 & 0.0895 & 0.1295 & 0.0170 & 0.0446 & -0.0165 & -0.0399 & 0.0391 & 0.0541 & 0.0521 & 0.0809 \\
		n=60, m=50, S=0.1 & -0.2476 & -0.2585 & 0.1019 & 0.1762 & 0.0208 & 0.0573 & -0.0197 & -0.0526 & 0.0438 & 0.0724 & 0.0603 & 0.1128 \\
		n=60, m=45, S=0.1 & -0.2355 & -0.2172 & 0.0947 & 0.1581 & 0.0200 & 0.0594 & -0.0192 & -0.0524 & 0.0410 & 0.0653 & 0.0556 & 0.1015 \\
		&       &       &       &       &       &       &       &       &       &       &       &  \\
		n=30, m=30, S=0.1 & 0.0164 & 0.1118 & 0.0410 & 0.1105 & 0.0209 & 0.0507 & -0.0199 & -0.0475 & 0.0205 & 0.0608 & 0.0209 & 0.0526 \\
		n=30, m=25, S=0.1 & -0.0041 & 0.1723 & 0.0545 & 0.1293 & 0.0277 & 0.0510 & -0.0267 & -0.0485 & 0.0270 & 0.0748 & 0.0283 & 0.0581 \\
		n=30, m=20, S=0.1 & -0.0830 & -0.0682 & 0.0763 & 0.2182 & 0.0363 & 0.1183 & -0.0331 & -0.0975 & 0.0344 & 0.0980 & 0.0437 & 0.1368 \\
		n=30, m=15, S=0.1 & -0.1491 & -0.2686 & 0.0986 & 0.4535 & 0.0388 & 0.2268 & -0.0375 & -0.1657 & 0.0435 & 0.1709 & 0.0576 & 0.3726 \\
		&       &       &       &       &       &       &       &       &       &       &       &  \\
		n=15, m=15, S=0.1 & -0.1570 & -0.0601 & 0.1141 & 0.2312 & 0.0463 & 0.1195 & -0.0433 & -0.1079 & 0.0495 & 0.1091 & 0.0684 & 0.1367 \\
		n=15, m=12, S=0.1 & -0.0868 & 0.0384 & 0.1218 & 0.2507 & 0.0584 & 0.1324 & -0.0560 & -0.1168 & 0.0564 & 0.1294 & 0.0694 & 0.1369 \\
		n=15, m=10, S=0.1 & -0.1774 & -0.1020 & 0.1865 & 0.3356 & 0.0838 & 0.1788 & -0.0725 & -0.1489 & 0.0778 & 0.1500 & 0.1211 & 0.2222 \\
		n=15, m=7, S=0.1 & -0.2125 & -0.3120 & 0.2017 & 0.6883 & 0.0800 & 0.3317 & -0.0759 & -0.2579 & 0.0849 & 0.2593 & 0.1273 & 0.5915 \\
		&       &       &       &       &       &       &       &       &       &       &       &  \\
		n=100, m=100, S=0.2 & -0.3000 & -0.2028 & 0.1100 & 0.0931 & 0.0100 & 0.0265 & -0.0100 & -0.0255 & 0.0483 & 0.0404 & 0.0635 & 0.0549 \\
		n=100, m=90, S=0.2 & -0.3016 & -0.2259 & 0.1107 & 0.1039 & 0.0099 & 0.0268 & -0.0098 & -0.0262 & 0.0485 & 0.0449 & 0.0639 & 0.0616 \\
		n=100, m=85, S=0.2 & -0.2929 & -0.2453 & 0.1165 & 0.1166 & 0.0106 & 0.0280 & -0.0101 & -0.0283 & 0.0466 & 0.0502 & 0.0617 & 0.0690 \\
		n=100, m=80, S=0.2 & -0.3172 & -0.2655 & 0.1253 & 0.1282 & 0.0124 & 0.0295 & -0.0123 & -0.0282 & 0.0544 & 0.0543 & 0.0733 & 0.0776 \\
		&       &       &       &       &       &       &       &       &       &       &       &  \\
		n=60, m=60, S=0.2 & -0.2141 & -0.1919 & 0.0733 & 0.1100 & 0.0138 & 0.0381 & -0.0136 & -0.0353 & 0.0325 & 0.0470 & 0.0419 & 0.0667 \\
		n=60, m=55, S=0.2 & -0.2224 & -0.2001 & 0.0805 & 0.1198 & 0.0156 & 0.0416 & -0.0155 & -0.0384 & 0.0355 & 0.0508 & 0.0463 & 0.0733 \\
		n=60, m=50, S=0.2 & -0.2549 & -0.2446 & 0.1045 & 0.1472 & 0.0208 & 0.0462 & -0.0189 & -0.0416 & 0.0447 & 0.0609 & 0.0625 & 0.0929 \\
		n=60, m=45, S=0.2 & -0.2654 & -0.2651 & 0.1122 & 0.1742 & 0.0213 & 0.0535 & -0.0204 & -0.0507 & 0.0481 & 0.0721 & 0.0666 & 0.1101 \\
		&       &       &       &       &       &       &       &       &       &       &       &  \\
		n=30, m=30, S=0.2 & 0.0164 & 0.1118 & 0.0410 & 0.1105 & 0.0209 & 0.0507 & -0.0199 & -0.0475 & 0.0205 & 0.0608 & 0.0209 & 0.0526 \\
		n=30, m=25, S=0.2 & 0.0215 & 0.1535 & 0.0479 & 0.1240 & 0.0244 & 0.0517 & -0.0230 & -0.0486 & 0.0240 & 0.0705 & 0.0244 & 0.0567 \\
		n=30, m=20, S=0.2 & -0.0673 & 0.0020 & 0.0682 & 0.1549 & 0.0334 & 0.0807 & -0.0304 & -0.0741 & 0.0311 & 0.0771 & 0.0387 & 0.0838 \\
		n=30, m=15, S=0.2 & -0.0398 & 0.0052 & 0.0717 & 0.2535 & 0.0361 & 0.1389 & -0.0342 & -0.1153 & 0.0342 & 0.1229 & 0.0390 & 0.1482 \\
		&       &       &       &       &       &       &       &       &       &       &       &  \\
		n=15, m=15, S=0.2 & -0.1570 & -0.0601 & 0.1141 & 0.2312 & 0.0463 & 0.1195 & -0.0433 & -0.1079 & 0.0495 & 0.1091 & 0.0684 & 0.1367 \\
		n=15, m=12, S=0.2 & -0.0064 & 0.1491 & 0.0910 & 0.2526 & 0.0457 & 0.1221 & -0.0452 & -0.1086 & 0.0460 & 0.1449 & 0.0471 & 0.1225 \\
		n=15, m=10, S=0.2 & -0.1921 & -0.1076 & 0.1435 & 0.2756 & 0.0635 & 0.1424 & -0.0549 & -0.1224 & 0.0639 & 0.1225 & 0.0892 & 0.1764 \\
		n=15, m=7, S=0.2 & -0.0583 & 0.0916 & 0.1439 & 0.3993 & 0.0715 & 0.2239 & -0.0688 & -0.1702 & 0.0689 & 0.2077 & 0.0903 & 0.2330 \\
		\hline
	\end{tabular}%
}
	\label{bayes2}%
\end{table}%

\begin{table}[htbp]
	\centering
	\caption{Confidence length and coverage probability for the parameters ($\gamma$,$\delta$) under different censoring schemes.}
		\resizebox{15cm}{!}{
		\renewcommand*{\arraystretch}{1.3}
	\begin{tabular}{ccc|cc|cc|cc}
		\hline
		& \multicolumn{4}{c|}{$\gamma$=1, $\delta$=1} & \multicolumn{4}{c}{$\gamma$=1.5, $\delta$=2} \\ \hline
		& \multicolumn{2}{c|}{Bayesian HPDC} & \multicolumn{2}{c|}{MLE} & \multicolumn{2}{c|}{Bayesian HPDC} & \multicolumn{2}{c}{MLE} \\
		\hline
	        & \multicolumn{1}{c}{CL} & \multicolumn{1}{c|}{CP} & \multicolumn{1}{c}{CL} & \multicolumn{1}{c|}{CP} & \multicolumn{1}{c}{CL} & \multicolumn{1}{c|}{CP} & \multicolumn{1}{c}{CL} & \multicolumn{1}{c}{CP} \\
		\cline{2-9}    \multicolumn{1}{c}{\multirow{2}[1]{*}{n=100, m=100, S=0.1}} & (0.3180, & (0.951, & (0.3099, & \multicolumn{1}{c|}{(0.952,} & (0.5288, & (0.952, & (0.4649, & (0.952, \\
		& 0.3563) & 0.952) & 0.4159) & 0.96)  & 0.8520) & 0.952) & 0.8187) & 0.957) \\
		\multirow{2}[0]{*}{n=100, m=90, S=0.1} & (0.3688, & (0.952, & (0.3442, & \multicolumn{1}{c|}{(0.946,} & (0.5515, & (0.951, & (0.5068, & (0.95, \\
		& 0.4162) & 0.951) & 0.4242) & 0.954) & 0.9159) & 0.951) & 0.8822) & 0.947) \\
		\multirow{2}[0]{*}{n=100, m=85, S=0.1} & (0.3686, & (0.953, & (0.3601, & (0.954, & (0.5738, & (0.952, & (0.5270, & (0.947, \\
		& 0.4303) & 0.953) & 0.4315) & 0.956) & 0.9019) & 0.951) & 0.9228) & 0.957) \\
		\multirow{2}[0]{*}{n=100, m=80, S=0.1} & (0.3377, & (0.952, & (0.3758, & (0.952, & (0.5815, & (0.951, & (0.5471, & (0.95, \\
		& 0.4366) & 0.951) & 0.4410) & 0.955) & 0.8816) & 0.954) & 0.9688) & 0.956) \\
		\hline
		\multirow{2}[0]{*}{n=60, m=60, S=0.1} & (0.3242, & (0.957, & (0.4038, & (0.954, & (0.6451, & (0.951, & (0.6056, & (0.954, \\
		& 0.4537) & 0.952) & 0.5375) & 0.948) & 1.0304) & 0.953) & 1.0671) & 0.959) \\
		\multirow{2}[0]{*}{n=60, m=55, S=0.1} & (0.4955, & (0.953, & (0.4425, & (0.954, & (0.7269, & (0.952, & (0.6526, & (0.955, \\
		& 0.5992) & 0.955) & 0.5468) & 0.953) & 1.0754) & 0.951) & 1.1389) & 0.956) \\
		\multirow{2}[0]{*}{n=60, m=50, S=0.1} & (0.4951, & (0.954, & (0.4782, & (0.951, & (0.7693, & (0.951, & (0.6979, & (0.959, \\
		& 0.6076) & 0.954) & 0.5642) & 0.952) & 1.3370) & 0.952) & 1.2349) & 0.962) \\
		\multirow{2}[0]{*}{n=60, m=45, S=0.1} & (0.5466, & (0.954, & (0.5151, & (0.956, & (0.7519, & (0.951, & (0.7445, & (0.955, \\
		& 0.6179) & 0.951) & 0.5921) & 0.961) & 1.2934) & \multicolumn{1}{c|}{0.951)} & 1.3555) & 0.966) \\
		\hline
		\multirow{2}[0]{*}{n=30, m=30, S=0.1} & (0.5174, & (0.953, & (0.5891, & (0.947, & (0.8079, & (0.951, & (0.8836, & (0.947, \\
		& 0.7871) & 0.951) & 0.7750) & 0.943) & 1.1309) & 0.951) & 1.5953) & 0.966) \\
		\multirow{2}[0]{*}{n=30, m=25, S=0.1} & (0.5806, & (0.951, & (0.6969, & (0.957, & (0.8977, & (0.951, & (1.0151, & (0.958, \\
		& 0.6939) & 0.953) & 0.8267) & 0.952) & 1.1974) & 0.951) & 1.8826) & 0.963) \\
		\multirow{2}[0]{*}{n=30, m=20, S=0.1} & (0.8191, & (0.952, & (0.8210, & (0.953, & (0.9861, & (0.951, & (1.1686, & (0.959, \\
		& 0.8541) & 0.953) & 0.9678) & 0.951) & 1.7118) & 0.951) & 2.4091) & 0.968) \\
		\multirow{2}[0]{*}{n=30, m=15, S=0.1} & (1.0720, & (0.951, & (0.9854, & (0.948, & (1.0137, & (0.951, & (1.3746, & (0.954, \\
		& 1.5067) & 0.951) & 1.3004) & 0.965) & 2.3050) & 0.951) & 3.4663) & 0.958) \\
		\hline
		\multirow{2}[0]{*}{n=15, m=15, S=0.1} & (0.8566, & (0.954, & (0.8816, & (0.945, & (1.0880, & (0.953, & (1.3225, & (0.945, \\
		& 0.9960) & 0.951) & 1.1293) & 0.934) & 1.8097) & 0.951) & 2.5000) & 0.965) \\
		\multirow{2}[0]{*}{n=15, m=12, S=0.1} & (0.9598, & (0.951, & (1.0852, & (0.959, & (1.3420, & (0.951, & (1.5635, & (0.95, \\
		& 1.0907) & 0.952) & 1.2963) & 0.94)  & 1.7570) & 0.951) & 3.2631) & 0.962) \\
		\multirow{2}[0]{*}{n=15, m=10, S=0.1} & (1.1078, & (0.951, & (1.2523, & (0.961, & (1.4652, & (0.951, & (1.7649, & (0.952, \\
		& 1.1124) & 0.951) & 1.5622) & 0.947) & 2.1629) & 0.951) & 4.2863) & 0.959) \\
		\multirow{2}[0]{*}{n=15, m=7, S=0.1} & (1.3373, & (0.951, & (1.6139, & (0.959, & (1.4974, & (0.951, & (2.1922, & (0.958, \\
		& 1.8838) & 0.951) & 2.7710) & 0.963) & 2.8849) & 0.951) & 7.8173) & 0.969) \\
		\hline
	\end{tabular}%
}
	\label{ci1}%
\end{table}%

\begin{table}[htbp]
	\centering
	\caption{Confidence length and coverage probability for the parameters ($\gamma$,$\delta$) under different censoring schemes.}
	\resizebox{15cm}{!}{
	\renewcommand*{\arraystretch}{1.3}
	\begin{tabular}{ccc|cc|cc|cc}
		\hline
		& \multicolumn{4}{c|}{$\gamma=1$, $\delta=1$} & \multicolumn{4}{c}{$\gamma=1.5$, $\delta=2$} \\ \hline
		& \multicolumn{2}{c|}{Bayesian HPDC} & \multicolumn{2}{c|}{MLE} & \multicolumn{2}{c|}{Bayesian HPDC} & \multicolumn{2}{c}{MLE} \\
		\hline
		         & \multicolumn{1}{c}{CL} & \multicolumn{1}{c|}{CP} & \multicolumn{1}{c}{CL} & \multicolumn{1}{c|}{CP} & \multicolumn{1}{c}{CL} & \multicolumn{1}{c|}{CP} & \multicolumn{1}{c}{CL} & \multicolumn{1}{c}{CP} \\
		\cline{2-9}    \multicolumn{1}{c}{\multirow{2}[1]{*}{n=100, m=100, S=0.2}} & (0.3179, & (0.951,  & (0.3099,  & (0.952,  & (0.5288,  & (0.952,  & (0.4648,  & (0.952,  \\
		&  0.3562) & 0.952) & 0.4158) & 0.96) & 0.8519) & 0.952) & 0.8187) & 0.957) \\
		\multirow{2}[0]{*}{n=100, m=90, S=0.2} & (0.3660, & (0.951,  & (0.3411,  & (0.948,  & (0.5274,  & (0.956,  & (0.4953,  & (0.947,  \\
		&  0.4885) & 0.952) & 0.4229) & 0.954) & 0.9306) & 0.952) & 0.8611) & 0.952) \\
		\multirow{2}[0]{*}{n=100, m=85, S=0.2} & (0.3942,  & (0.952,  & (0.3553,  & (0.953,  & (0.5649,  & (0.956,  & (0.5108,  & (0.951,  \\
		& 0.4954) & 0.954) & 0.4288) & 0.953) & 0.9012) & 0.951) & 0.8890) & 0.951) \\
		\multirow{2}[0]{*}{n=100, m=80, S=0.2} & (0.4006,  & (0.951,  & (0.3695,  & (0.951,  & (0.6132,  & (0.951,  & (0.5261,  & (0.947,  \\
		& 0.4243) & 0.952) & 0.4368) & 0.953) & 0.9005) & 0.952) & 0.9200) & 0.95) \\
		\hline
		
		\multirow{2}[0]{*}{n=60, m=60, S=0.2} & (0.3241,  & (0.957,  & (0.4037,  & (0.954,  & (0.6451,  & (0.951,  & (0.6056,  & (0.954,  \\
		& 0.4537) & 0.952) & 0.5375) & 0.948) & 1.0303) & 0.953) & 1.0670) & 0.959) \\
		\multirow{2}[0]{*}{n=60, m=55, S=0.2} & (0.3853, & (0.952,  & (0.4389,  & (0.953,  & (0.6768,  & (0.954,  & (0.6394,  & (0.956,  \\
		&  0.5344) & 0.954) & 0.5454) & 0.954) & 1.1031) & 0.954) & 1.1143) & 0.961) \\
		\multirow{2}[0]{*}{n=60, m=50, S=0.2} & (0.5138,  & (0.951,  & (0.4709,  & (0.96,  & (0.7775,  & (0.951,  & (0.6738,  & (0.955,  \\
		& 0.6211) & 0.953) & 0.5598) & 0.95) & 1.1386) & 0.953) & 1.1795) & 0.96) \\
		\multirow{2}[0]{*}{n=60, m=45, S=0.2} & (0.4386,  & (0.952,  & (0.5034,  & (0.952,  & (0.7577,  & (0.951,  & (0.7089,  & (0.957,  \\
		& 0.6178) & 0.953) & 0.5818) & 0.956) & 1.2459) & 0.951) & 1.2593) & 0.962) \\
		\hline
		
		\multirow{2}[0]{*}{n=30, m=30, S=0.2} & (0.5174,  & (0.953,  & (0.5890,  & (0.947,  & (0.8078,  & (0.951,  & (0.8836,  & (0.947,  \\
		& 0.7870) & 0.951) & 0.7749) & 0.943) & 1.1308) & 0.951) & 1.5953) & 0.966) \\
		\multirow{2}[0]{*}{n=30, m=25, S=0.2} & (0.5693, & (0.951,  & (0.6851,  & (0.959,  & (0.7871,  & (0.953,  & (0.9777,  & (0.953,  \\
		&  0.7666) & 0.951) & 0.8172) & 0.951) & 1.2227) & 0.951) & 1.7784) & 0.963) \\
		\multirow{2}[0]{*}{n=30, m=20, S=0.2} & (0.8479, & (0.951,  & (0.7903,  & (0.954,  & (0.8870,  & (0.951,  & (1.0878,  & (0.961,  \\
		&  0.7257) & 0.951) & 0.9183) & 0.949) & 1.4659) & 0.953) & 2.0885) & 0.97) \\
		\multirow{2}[0]{*}{n=30, m=15, S=0.2} & (0.9756,  & (0.953,  & (0.9230,  & (0.959,  & (1.0329,  & (0.951,  & (1.2294,  & (0.953,  \\
		& 1.0953) & 0.951) & 1.1366) & 0.951) & 1.8774) & 0.951) & 2.6186) & 0.959) \\
		\hline
		
		\multirow{2}[0]{*}{n=15, m=15, S=0.2} & (0.8566,  & (0.954,  & (0.8816,  & (0.945,  & (1.0879,  & (0.953,  & (1.3224,  & (0.945,  \\
		& 0.9959) & 0.951) & 1.1293) & 0.934) & 1.8096) & 0.951) & 2.5000) & 0.965) \\
		\multirow{2}[0]{*}{n=15, m=12, S=0.2} & (0.9949,  & (0.951,  & (1.0575,  & (0.958,  & (1.1353,   & (0.951,  & (1.4842,  & (0.947,  \\
		& 0.9821) & 0.951) & 1.2549) & 0.939) & 1.8291) & 0.951) & 2.9176) & 0.964) \\
		\multirow{2}[0]{*}{n=15, m=10, S=0.2} & (1.0483, & (0.952,  & (1.1943,  & (0.955,  & (1.3200,  & (0.951,  & (1.6201,  & (0.959,  \\
		&  1.0680) & 0.952) & 1.4344) & 0.944) & 1.9196) & 0.952) & 3.4056) & 0.966) \\
		\multirow{2}[0]{*}{n=15, m=7, S=0.2} & (1.2669,  & (0.952,  & (1.4606,  & (0.964,  & (1.4345,  & (0.952,  & (1.8905,  & (0.955,  \\
		& 1.5338) & 0.951) & 1.9870) & 0.959) & 2.2919) & 0.951) & 4.6995) & 0.962)\\
		\hline
	\end{tabular}%
}
	\label{ci2}%
\end{table}%


\section{Real Data Example}

For the real-data application, we use the data from Hinkley \cite{hinkley1977quick}, which consists of 30 values of March precipitation readings (in inches) for Minneapolis/St. Paul. The data set is provided in Table \ref{dataset}. First, it must be verified that this data set can be used to analyze the Weibull distribution. Also, for the sake of comparison, Lindley and Inverse Weibull (IW) distributions are also taken into consideration. We looked at the Kolmogorov-Simirnov (KS) test to determine the quality of fit. In order to confirm the data's suitability, other criteria are also taken into account, including the Akaike information criterion (AIC), the corrected Akaike information criterion (AICC), the Bayesian information criterion (BIC), the Hannan-Quinn criterion (HQC), and negative log-likelihood (NLL). The distribution with the smallest estimated values of the KS statistic, AIC, AICC, BIC, HQC, and NLL and a comparatively high p-value is the better lifetime model. Table \ref{datafit} indicates that the Weibull model fits the data far better than the other models. Additionally, we provide fitted density, empirical distribution function, PP-plot and QQ-plot in Figure \ref{fittingplot}. Thus, Weibull distribution is used to draw conclusions about the data under consideration.

\begin{table}[htbp]
	\centering
	\caption{Data of 30 successive values of March precipitation}
	\label{dataset}
		\begin{tabular}{cccccccccc}
	\toprule
	0.77 & 1.74 & 0.81 & 1.20 & 1.95 & 1.20 & 0.47 & 1.43 & 3.37 & 2.20 \\
	3.00 & 3.09 & 1.51 & 2.10 & 0.52 & 1.62 & 1.31 & 0.32 & 0.59 & 0.81 \\
	2.81 & 1.87 & 1.18 & 1.35 & 4.75 & 2.48 & 0.96 & 1.89 & 0.90 & 2.05 \\
	\toprule
\end{tabular}
\end{table}

\begin{table}[htbp]
	\centering
	\caption{The goodness-of-fit statistics for the data set}
	\resizebox{15cm}{!}{
		\renewcommand*{\arraystretch}{1.5}
	\begin{tabular}{|c|c|c|c|c|c|c|c|c|c|}
		\hline
		\textbf{Distribution} & \multicolumn{2}{c|}{\textbf{Estimates}} & \textbf{KS} & \textbf{p-value} & \textbf{AIC}   & \textbf{AICC}  & \textbf{BIC}   & \textbf{HQC}   & \textbf{NLL} \\
		\hline
		Weibull & 1.8089 & 0.3155 & 0.0689 & 0.9988 & 81.2866 & 81.7310 & 84.0890 & 82.1831 & 77.2866 \\
		\hline
		Lindley & 0.9096 &       & 0.1882 & 0.2383 & 88.2875 & 88.4303 & 89.6886 & 88.7357 & 86.2875 \\
		\hline
		IW & 1.0162 & 1.5496 & 0.1524 & 0.4893 & 154.9602 & 155.4046 & 157.7626 & 155.8567 & 150.9602 \\
		\hline
	\end{tabular}%
}
\label{datafit}
\end{table}%

\begin{figure}[ht] 
	\caption{Probability plots from the fitted Weibull distribution}
	\label{fittingplot} 
	\begin{minipage}[b]{0.5\linewidth}
		\centering
		\includegraphics[width=1\linewidth]{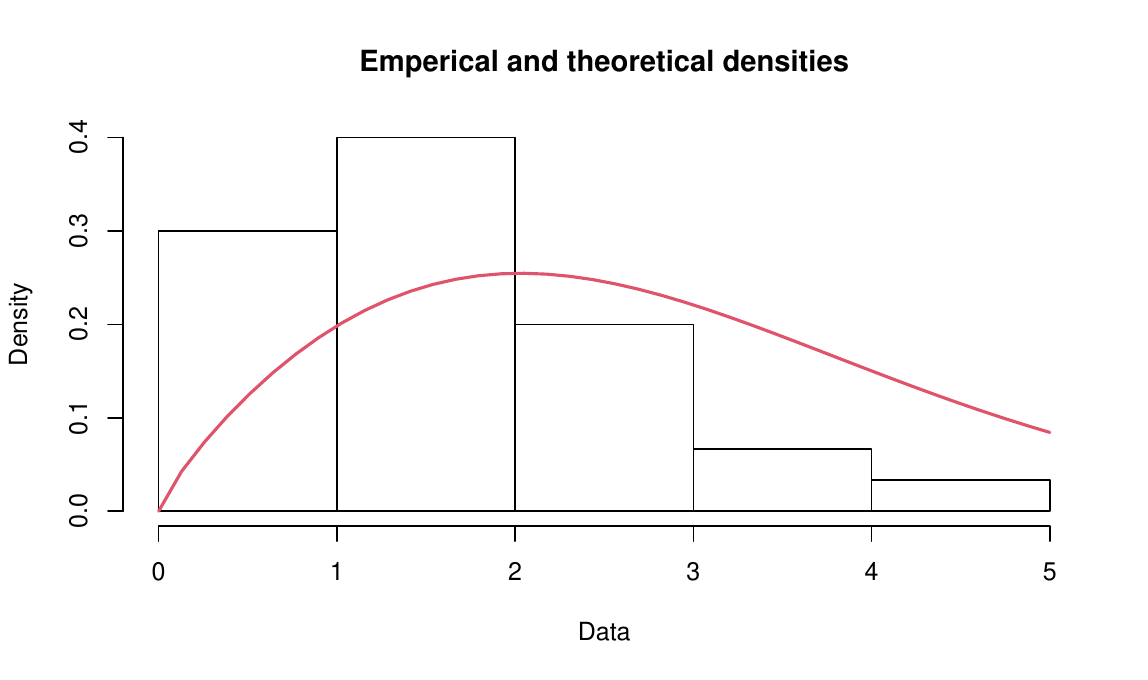} 
		\vspace{4ex}
	\end{minipage}
	\begin{minipage}[b]{0.5\linewidth}
		\centering
		\includegraphics[width=1\linewidth]{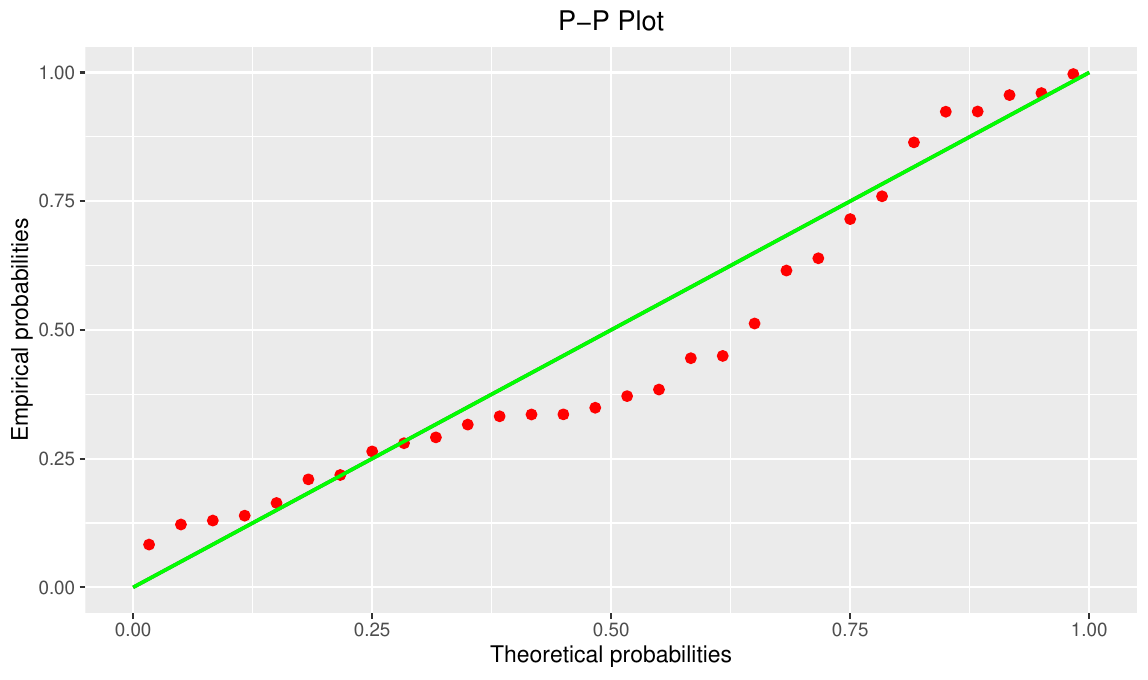} 
		\vspace{4ex}
	\end{minipage} 
	\begin{minipage}[b]{0.5\linewidth}
		\centering
		\includegraphics[width=1\linewidth]{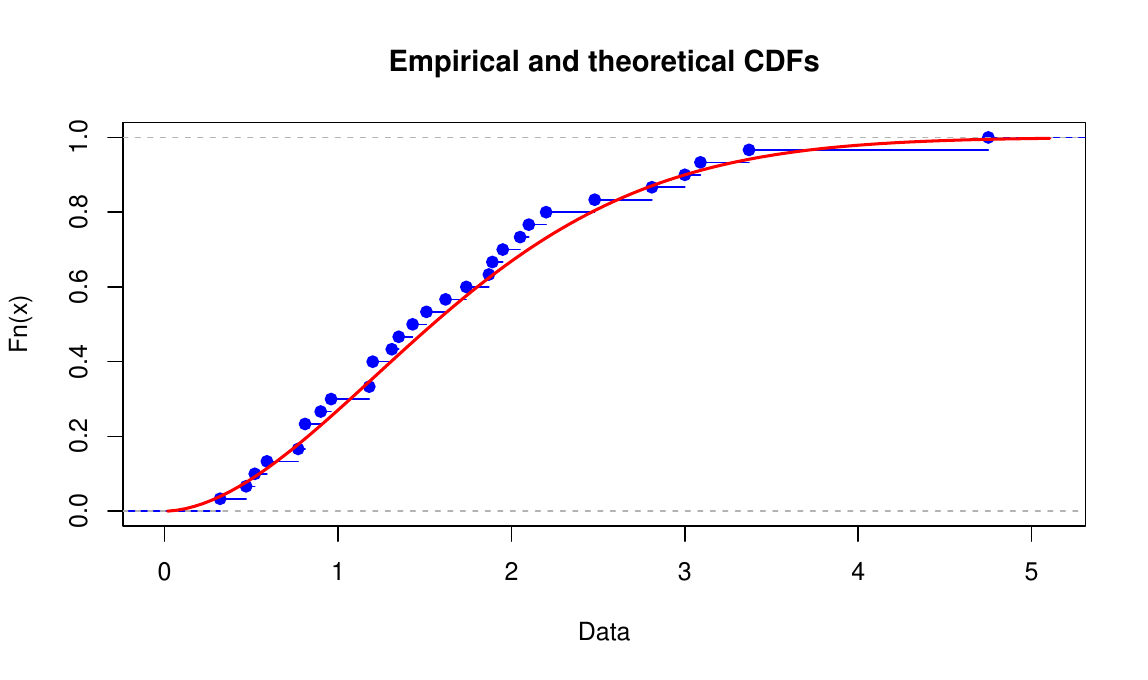} 
		\vspace{4ex}
	\end{minipage}
	\begin{minipage}[b]{0.5\linewidth}
		\centering
		\includegraphics[width=1\linewidth]{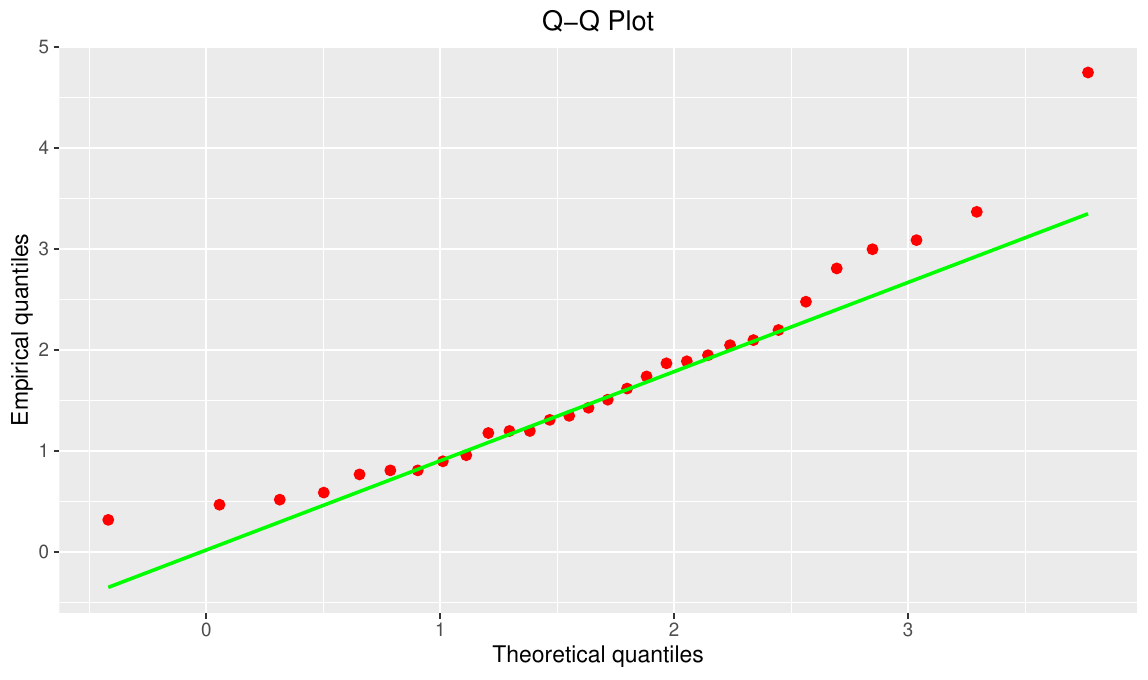} 
		\vspace{4ex}
	\end{minipage} 
\end{figure}

In order to analyse the data set under the T1T2 censoring scheme, we created artificial data sets from the real data for the schemes, $(m, S)=(20, 1), (20, 2), (15, 1)$ and $(15, 2)$. For each sample, MLE, the associated asymptotic confidence interval (ACI), the Bayes estimates under the squared error loss function, and the HPDC are calculated and are provided in the Table \ref{dataest}. Bayes estimates are calculated with the assumption that all hyperparameters are zeros. For each censoring scheme, MLE and Bayes estimates are nearly, but not quite, identical. In the case of confidence intervals, the HPDC is shorter than the ACI. For a fixed value of $m$, interval length decreases with supplymentary time S. The interval length of the scale parameter is shorter than that of the shape parameter.

\begin{table}[htbp]
	\centering
	\caption{Estimate values under different censoring schemes}
	\begin{tabular}{ccccc}
		\toprule
		\textbf{Scheme} & \multicolumn{2}{c}{\textbf{MLE}} & \multicolumn{2}{c}{\textbf{Bayesian}} \\
		\cmidrule{2-5}    \textbf{(m, S)} & \textbf{Estimates} & \textbf{ACI}   & \textbf{Estimates} & \textbf{HPDC} \\
		\midrule
		\multirow{2}[2]{*}{(20,1)} & $\hat{\gamma}$=1.8461 & (1.2618, 2.4304) & $\hat{\gamma}$=1.9131 & (1.4346, 2.4902) \\
		& $\hat{\delta}$=0.3099 & (0.1307, 0.4891) & $\hat{\delta}$=0.2989 & (0.1677, 0.5092) \\
		\midrule
		\multirow{2}[2]{*}{(20,2)} & $\hat{\gamma}$=1.8534 & (1.3243, 2.3826) & $\hat{\gamma}$=1.7649 & (1.3577, 2.2958) \\
		& $\hat{\delta}$=0.3105 & (0.1327, 0.4882) & $\hat{\delta}$=0.3491 & (0.1813, 0.5321) \\
		\midrule
		\multirow{2}[2]{*}{(15,1)} & $\hat{\gamma}$=1.9386 & (1.2786, 2.5985) & $\hat{\gamma}$=1.887 & ( 1.2267, 2.4044) \\
		& $\hat{\delta}$=0.3042 & (0.1259, 0.4825) & $\hat{\delta}$= 0.3269 & (0.1614, 0.4939) \\
		\midrule
		\multirow{2}[2]{*}{(15,2)} & $\hat{\gamma}$=1.9174 & (1.3644, 2.4704) & $\hat{\gamma}$=1.8477 & (1.3800, 2.3001) \\
		& $\hat{\delta}$=0.3031 & (0.1272, 0.4791) & $\hat{\delta}$=0.3325 & (0.2195, 0.5441) \\
		\bottomrule
	\end{tabular}%
	\label{dataest}%
\end{table}%

\section{Conclusion}
In this paper, we have introduced the T1-T2 mixture censoring scheme and examined its statistical inference based on the Weibull distribution. The new scheme guarantees a maximum possible number of failures within the prefixed time limit. The expected value of the failures and the expected duration of experiment are derived and it has been seen that both are invariant under scalar multiplication. The Fisher information for the new censoring scheme is presented.
The performances of MLE and Bayesian estimates for squared error and LINEX loss function,  demonstrate their effectiveness in providing reliable information. A real data is used to illustrate the functionality of the T1-T2 mixture censoring scheme.

\clearpage
\bibliographystyle{plain}
\bibliography{mixbib}

\end{document}